\definecolor{colorhkust}{RGB}{20,43,140}
\definecolor{colortsinghua}{RGB}{116,52,129}
\definecolor{color1}{RGB}{128,0,0}
\newtheorem{lemma}{Lemma}
\begin{document}

        \title{Coordinated Passive Beamforming for Distributed Intelligent Reflecting Surfaces Network}
        
\author{\IEEEauthorblockN{Jinglian He$^{*\dag\S}$, Kaiqiang Yu$^*$,  and Yuanming Shi$^*$
               }\\      
        \IEEEauthorblockA{
                $^{*}$School of Information Science and Technology,
                ShanghaiTech University, Shanghai 201210, China\\
                $^{\dag}$Shanghai Institute of Microsystem and Information Technology, Chinese Academy of Sciences, China\\
                $^{\S}$University of Chinese Academy of Sciences, Beijing 100049, China\\
                E-mail: hejl1@shanghaitech.edu.cn, yukaiqiangsdu@gmail.com, shiym@shanghaitech.edu.cn\\
        }%\thanks{This work was supported in part by the National Nature Science Foundation of China under Grant 61601290 and in part by the Shanghai Sailing Program under Grant 16YF1407700.} 
    }
        \maketitle
\begin{abstract}
Intelligent reflecting surface (IRS) is a proposing technology in 6G to enhance the performance of wireless networks by smartly reconfiguring the propagation environment with a large number of passive reflecting elements. However, current works mainly focus on single IRS-empowered wireless networks, where the channel rank deficiency problem has emerged. In this paper, we  propose a distributed IRS-empowered communication network architecture, where multiple source-destination pairs communicate through multiple distributed IRSs. We further contribute to maximize the achievable sum-rates in this network  via jointly optimizing the transmit power vector at the sources and the phase shift matrix with passive beamforming at all distributed IRSs. Unfortunately, this problem turns out to be non-convex and highly intractable, for which an alternating approach is developed via solving the resulting fractional programming problems alternatively. In particular, the closed-form expressions are proposed for coordinated passive beamforming at IRSs. The numerical results will demonstrate the  algorithmic advantages and desirable performances of the distributed IRS-empowered communication network.

\end{abstract}
\begin{IEEEkeywords}
 Distributed intelligent reflection surfaces, passive beamforming, fractional programming.
\end{IEEEkeywords}

\section{Introduction}

The 6G network is envisioned to support ubiquitous intelligent services with challenging requirements on data rates, latency, and connectivity, for which an AI empowered network architecture, i.e., network intelligentization, subnetwork evolution and intelligent radio, is embraced \cite{6g}. Inspired by these trends, a novel communication paradigm of ''smart radio environment'' enabled by intelligent reflecting surfaces (IRSs) has been proposed to enhance the spectrum and energy efficiency of the wireless networks by reconfiguring the signal propagation environments \cite{smart,1906-09490,ch}. Specifically, IRS is a reconfigurable metasurface of electromagnetic (EM) material consisting of  massive passive reflecting elements, each of which is capable of independently reflecting the incident signal by changing its amplitude and phase \cite{8466374,1906-06578}. By smartly adjusting the reflected signal propagation, we can constructively integrate reflected signals with non-reflected ones and destructively cancel the interference at receiver, thereby achieving the desired performance of wireless networks \cite{1905-00152}. 

Recently, there have been significant progresses on beamforming design for IRS-empowered wireless network \cite{8647620,1906-09434,shen11,1904-12475}. The transmit power minimization problem was exploited via jointly optimizing active beamforming at the base station (BS) and passive beamforming at the IRS in the proposed IRS-empowered MISO wireless system \cite{8647620} and non-orthogonal multiple access (NOMA) \cite{1906-09434}. The weighted sum-rate maximization problem was further considered in \cite{shen11} by fractional programming. Moreover, IRS was deployed in multiple access networks to boost the received signal power for over-the-air computation \cite{1904-12475}.

However, existing works mainly focused on the wireless networks with a single IRS, where the channel rank deficiency problem and computation problem have emerged \cite{1906-02360}. Specifically, the rank-one line-of-sight (LoS) channel is often considered to model the BS-to-IRS link, which yields the rank deficiency problem and limits the system capacity to serve multi-users. In addition, IRS is individually designed according to local channel information, thereby inducing channel estimation and computation tasks for optimization on local IRS controller. One promising solution is to deploy distributed IRSs in wireless networks, where BS-to-IRS channel can be generated as the sum of multiple rank-one channels, thereby guaranteeing high rank channels \cite{1906-02360}. Moreover, in the distributed IRS-empowered communication network architecture, distributed IRSs are coordinated via a central network controller, which reduces the computation at IRSs \cite{1906-09490}.

\begin{comment}
 In contrast to previous single IRS-empowered networks, where IRS is individually designed according to local channel information,  we contribute to propose a distributed IRS-empowered communication system architecture, where distributed IRSs are coordinated via a central network controller based on the global channel information estimated by each local IRS controller. Another advantage of this architecture is to reduce the computation on IRSs, although feedback overhead needs to be considered due to the availability of at least one power amplifier and one ADC for transmitting the estimated local channel information to the central network controller \cite{1906-09490}.
\end{comment}

With the benefits of centralized computation resources and coordinated passive beamforming for phase
shift matrix design at IRSs, the proposed distributed IRS-empowered wireless system can significantly improve network spectral efficiency and energy efficiency in dense wireless networks \cite{DBLP:journals/cm/ShiZCL18}. In this paper, we shall propose to deploy distributed IRSs in distributed  wireless system with multiple source-destination pairs. Our goal is to maximize the sum-rate via jointly optimizing the transmit power vector at the sources and the phase shift matrix with passive beamforming at all distributed IRSs. The formulated problem turns out to be non-convex and highly intractable due to the multi-user interference and joint optimization. We thus design an iterative alternating algorithm by decoupling the optimization variables, which divides the original problem into two tractable subproblems, i.e., power control problem at sources and coordinated passive beamforming problem at IRSs.  The resulting multiple-ratio fractional programming subproblem can be reformulated as a biconvex problem via quadratic transform  \cite{shen11}, supported by an alternating convex
search approach to solve it \cite{Gorski2007}. 

\section{System Model and problem formulation}
\subsection{System Model}

We consider a distributed IRS-empowered communication system consisting of $K$ single-antenna source-destination pairs and $L$ cooperative intelligent reflecting surfaces (IRS)s controlled by a coordinated IRS controller. The $l$-th IRS has $M_l$ passive elements that are used for assisting the communication from source to destination via dynamically adjusting the phase shift according to the channel state information (CSI). In particular, IRS can operate in two coordinated modes, i.e., receiving mode for sensing environment and reflecting mode for scattering the incident signals from the sources \cite{8647620,IET}. In previous single IRS-empowered systems, IRS individually changes the phase shift based on the local channel information estimated by its controller, thereby yielding the channel estimation and computation tasks for optimization on single IRS. However, in our distributed IRS-empowered system, distributed IRSs are coordinated by a central IRS controller according to the global channel information received from each single IRS, thus avoiding the computation on each single IRS.  
 Due to the magnitude path loss, we ignore the power of signals reflected by IRS twice or more times \cite{8647620}. Furthermore, we assume a quasi-static flat-fading channel model with prefect CSI for all channels. 
 Thus, the system consists of three components, i.e., source-IRS link, IRS-destination link, source-destination link. To further simplify, we assume that the direct link does not exist, which represents it is either blocked or has negligible receive power. With this assumption, the signals transmitted from the sources to the destinations experience two phases, as shown in Fig. \ref{model}. In the first phase, the sources transmit signals to the IRS. The received signal $\bm{t}_l\in\mathbb{C}^{M_l\times 1}$ at the $l$-th IRS is given by
 \setlength\arraycolsep{2pt}
\begin{eqnarray}
%\bm{t}_l=\bm{H}_l\bm{P}^{1/2}\bm{s}+\bm{n}_l,
\bm{t}_l=\sum_{k=1}^K \sqrt{p_k}\bm{h}_{l,k}s_k+\bm{n}_{l,k},
\end{eqnarray}
where $\bm{n}_{l,k}$ is the Gaussian noise with distribution $\mathcal{C}\mathcal{N}(0,\sigma_r^2)$, $s_k$ denotes the transmitted symbol from $k$-th sources with $\mathbb{E}\{|s_k|^2\}=1$, $p_k$ represents the transmit power of the $k$-th sources with power constraint $p_k\leq P_{\max}$, and $\bm{h}_{l,k}\in \mathbb{C}^{M_l\times 1}$ is the vector containing the channel coefficients from the $k$-th source to the $l$-th IRS. In the second phase, the $l$-th IRS reflects the received signal based on the diagonal phase shifts matrix $\bm{\Theta}_l=\beta_l\text{diag}( e^{j\phi_{l,1}},..., e^{j\phi_{l,M_l}})$ with $\phi_m\in[0,2\pi]$ and $\beta_l\in [0,1]$ as the amplitude reflection coefficient on the incident signals. Without loss of generality, we assume $\beta_l=1$. Therefore, the reflected signal at the $l$-th IRS can be written as
\begin{eqnarray}
%\bm{r}_l=\bm{\Theta}_l\bm{H}_l\bm{P}^{1/2}\bm{s}+\bm{\Theta_l}\bm{n}_l.
\bm{r}_l=\sum_{k=1}^K \sqrt{p_k}\bm{\Theta}_l\bm{h}_{l,k}s_k+\bm{\Theta}_l\bm{n}_{l,k}.
\end{eqnarray}
Then, the signal received at the $k$-th destination is
\begin{eqnarray}
\label{s-d-channel}
        y_k=\sum_{l=1}^L\sqrt{p_k}\bm{g}_{k,l}^{\sf{H}}\bm{\Theta}_l\bm{h}_{l,k}s_k+\sum_{i\neq k}\sum_{l=1}^{L}\sqrt{p_i}\bm{g}_{k,l}^{\sf{H}}\bm{\Theta}_l\bm{h}_{l,i}s_i\nonumber\\
        +\sum_{l=1}^L\bm{g}_{k,l}^{\sf{H}}\bm{\Theta}_l\bm{n}_l+w_k,\ \ \ \ \ \ \ \ \ \ \ \ \ \ \ \ \ \ \ \ \ \ \ \ \ \ \ \ \ \ \ 
\end{eqnarray}
where $w_k$ is the additive noise at the $k$-th destination with distribution $\mathcal{C}\mathcal{N}(0,\sigma_d^2)$, $\bm{g}_{k,l}\in \mathbb{C}^{M_l\times 1}$ denotes the channel coefficients between the $l$-th IRS and the $k$-th destination. Let $\bm{\theta}=[e^{j\phi_{1,1}},...,e^{j\phi_{1,M_1}},e^{j\phi_{2,1}},...,e^{j\phi_{L,M_L}}]^{\sf{H}}$, $\bm{v}_{i,k}=[\bm{h}_{1,i}^{\sf{T}}\text{diag}(\bm{g}_{k,1}^{\sf{H}}),...,\bm{h}_{L,i}^{\sf{T}}\text{diag}(\bm{g}_{k,L}^{\sf{H}})]^{\sf{T}}$, and $\bm{z}_k=[\bm{n}_1^{\sf{T}}\text{diag}(\bm{g}_{k,1}^{\sf{H}}),...,\bm{n}_L^{\sf{T}}\text{diag}(\bm{g}_{k,L}^{\sf{H}})]^{\sf{T}}$. We can rewrite (\ref{s-d-channel}) as
\begin{eqnarray}
        \label{new-s-d-channel}
        y_k=\sqrt{p_k}\bm{\theta}^{\sf{H}}\bm{v}_{k,k}s_k+\sum_{i\neq k}\sqrt{p_i}\bm{\theta}^{\sf{H}}\bm{v}_{i,k}s_i
        +\bm{\theta}^{\sf{H}}\bm{z}_k+w_k.
\end{eqnarray}

Based on the single-user detection, each destination treats the interferences as Gaussian noise. Therefore, the signal-to-interference-plus-noise ratio (SINR) at the $k$-th destination can be written as
\begin{eqnarray}
\text{SINR}_k=\frac{p_k|\bm{\theta}^{\sf{H}}\bm{v}_{k,k}|^2}{\sum_{i\neq k}p_i|\bm{\theta}^{\sf{H}}\bm{v}_{i,k}|^2+|\bm{\theta}^{\sf{H}}\bm{z}_k|^2+\sigma_d^2}.
\end{eqnarray}
Then, the achievable rate of the $k$-th destination is given by
\begin{eqnarray}
        R_k(\bm{p},\bm{\theta})=\frac{1}{2}\log_2(1+\text{SINR}_k),
\end{eqnarray}
where $\bm{p}=[p_1,...,p_K]^{\sf{T}}$.
\begin{figure}[t]
	\center
	\includegraphics[scale = 0.3]{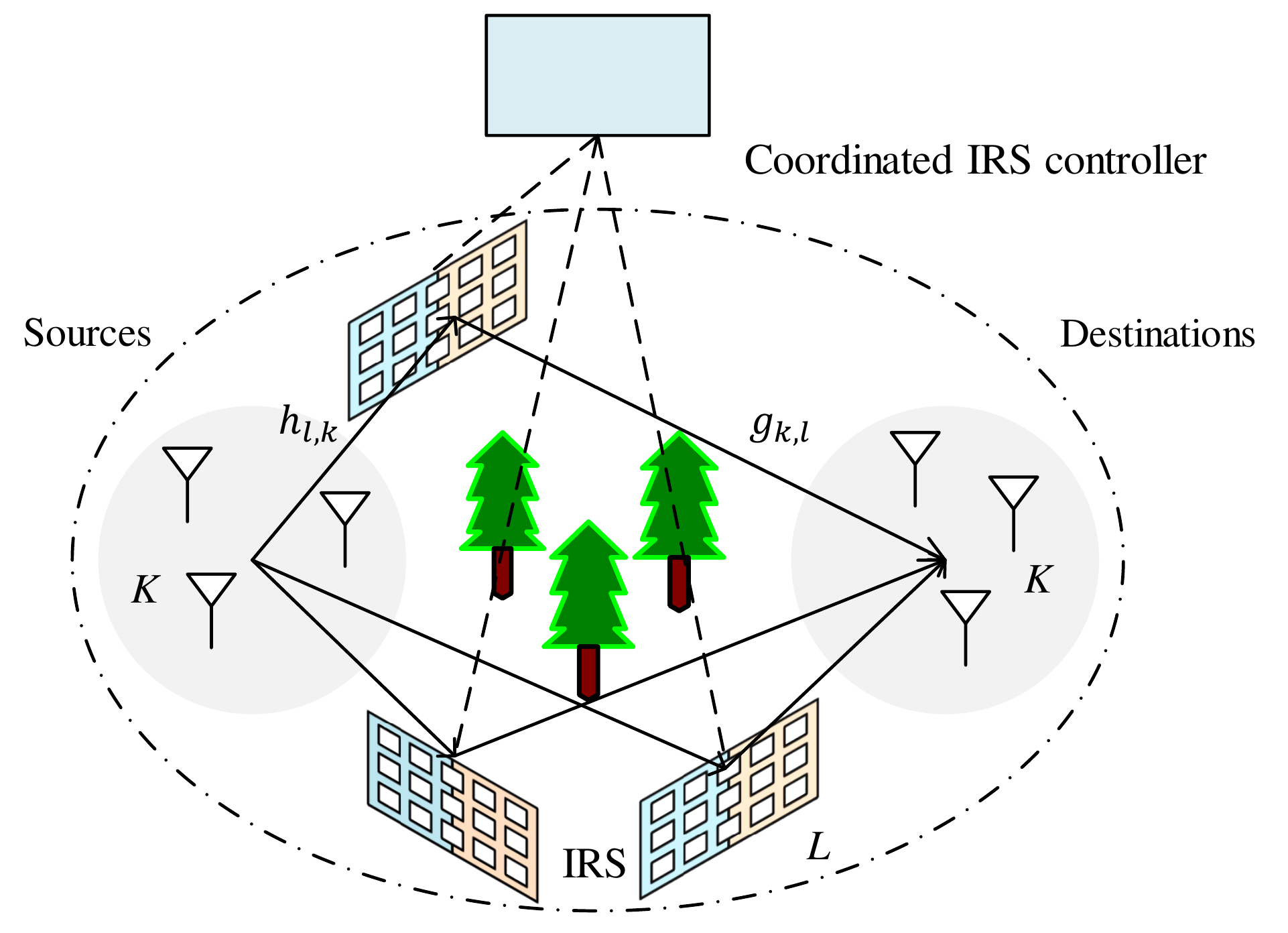}
	\caption{An interference channel with distributed intelligent reflecting surfaces.}
	\label{model}
\end{figure}

\subsection{Problem Formulation}
In this paper, we aim to maximize the sum-rate by jointly designing the power coefficient vector $\bm{p}$ and the reflection coefficient vector $\bm{\theta}$ under the following power transmit constraint $\mathcal{P}$ and the IRS phase constraint $\mathcal{C}$:
\begin{eqnarray}
\mathcal{P}=\{p\big|0\leq p\leq P_{\max}\},\ \mathcal{C}=\{\theta\big||\theta|^2\leq 1\}.
\end{eqnarray}
Let $N=\sum_{l=1}^L M_l$ denote the number of passive elements. The sum-rate maximization problem can be formulated as
\begin{eqnarray}
\mathscr{P}:\mathop {\rm{maximize}}_{\bm{p},\bm{\theta}}\ \ R_{\text{sum}}=\sum_{k=1}^K\frac{1}{2}\log_2(1+\text{SINR}_k)\ \ \ \ \ \nonumber\\
{\rm{subject~to}}\ \ \ p_k\in\mathcal{P},\ \forall k=1,2,...,K,\ \ \ \ \ \ \ \ \ \ \ \nonumber\\
\theta_i\in\mathcal{C},\ \forall i=1,2,...,N.\ \ \ \ \ \ \ \ \ \ \ \ \ 
\end{eqnarray}

%\begin{eqnarray}
%       \mathscr{P}:\mathop {\rm{maximize}}_{\bm{p},\bm{\theta}}&R_{\text{sum}}=\sum_{k=1}^K\frac{1}{2}\log_2(1+\text{SINR}_k) \\
%       \ {\rm{subject~to}}&p_k\leq P_{\max},\ k=1,2,...,K\ \ \ \ \ \ \ \   \\
%       &        |\theta_{l,n}|^2\leq 1\ \ \ \ \ \ \ \ \ \ \ \ \ \ \ \ \ \ \ \ \ \ \ \ \ \ \ 
%\end{eqnarray}

This problem turns out to be non-convex due to the multi-user interference and ratio operation. Another challenge introduced by jointly optimizing $\bm{p}$ and $\bm{\theta}$ makes problem $\mathscr{P}$ highly intractable. In next section, we propose to design an efficient iterative algorithm to find a suboptimal solution via alternatively optimizing $\bm{p}$ and $\bm{\theta}$.

\section{Optimal Coordinated Passive Beamforming}
In this section, we propose a low-complexity iterative suboptimal algorithm, which divides problem $\mathscr{P}$ into several tractable subproblems via alternatively optimizing $\bm{p}$ and $\bm{\theta}$.

\subsection{Lagrangian Dual Reformulation}
Basically, we can directly employ fractional programming to solve the original problem via applying the quadratic transform to each SINR term. In this approach, convex optimaizations problem need to be solved numerically in each iteration, which incurs large computation. Another more desirable and efficient method, based on a Lagangian dual reformulation of the original problem, performs in closed form at each iteration. Although the original problem is non-convex, we can always derive its upper bound from Lagrangian dual form to find a suboptimal solution.
To develop this efficient algorithm, we consider the following Lagrangian dual reformulation of the original problem $\mathscr{P}$ proposed in \cite{shen11}:
\begin{eqnarray}
\ \ \ \ \ \ \ \ \ \ \mathscr{L}:\mathop {\rm{maximize}}_{\bm{p},\bm{\theta},\bm{\mu}}\ \  f_1(\bm{p},\bm{\theta},\bm{\mu})\ \ \ \ \ \ \ \ \ \ \ \ \ \ \ \ \ \ \ \ \  \nonumber\\
{\rm{subject~to}}\ \ \ p_k\in\mathcal{P},\ \forall k=1,2,...,K,\ \ \ \nonumber\\
\theta_i\in\mathcal{C},\ \forall i=1,2,...,N.\ \ \ \ \ 
\end{eqnarray}
where $\bm{\mu}=[\mu_1,..,\mu_K]^{\sf{T}}$ denotes an auxiliary variable vector, and the new objective is given by
\begin{eqnarray}
f_1(\bm{p},\bm{\theta},\bm{\mu})=\sum_{k=1}^K \frac{1}{2}\log_2(1+\mu_k)-\sum_{k=1}^K\frac{1}{2}\mu_k\nonumber\\
+\sum_{k=1}^K\frac{(1+\mu_k)p_k|\bm{\theta}^{\sf{H}}\bm{v}_{k,k}|^2}{2(\sum_{i= 1}^Kp_i|\bm{\theta}^{\sf{H}}\bm{v}_{i,k}|^2+|\bm{\theta}^{\sf{H}}\bm{z}_k|^2+\sigma_d^2)}.
\end{eqnarray}
Based on the proposed iterative algorithm, $\bm{p}$ and $\bm{\theta}$ can be fixed firstly. Then, optimal $\mu_k$ can be obtained via setting $\partial f_1/\partial \mu_k$ to be zero, i.e.,
\begin{eqnarray}
\label{optimal_mu}
\mu_k^{*}=\frac{p_k|\bm{\theta}^{\sf{H}}\bm{v}_{k,k}|^2}{\sum_{i\neq k}p_i|\bm{\theta}^{\sf{H}}\bm{v}_{i,k}|^2+|\bm{\theta}^{\sf{H}}\bm{z}_k|^2+\sigma_d^2}\ k=1,...,K.
\end{eqnarray}
After updating $\mu_k$ as $\mu_k^*$, the first two terms of $f_1$ remain constant. Therefore, problem $\mathscr{L}$ can be further reduced to
\begin{eqnarray}
\ \ \ \ \ \ \ \ \ \ \mathscr{L}_1:\mathop {\rm{maximize}}_{\bm{p},\bm{\theta},\bm{\mu}}\ \  f_2(\bm{p},\bm{\theta})\ \ \ \ \ \ \ \ \ \ \ \ \ \ \  \ \ \ \ \ \ \ \ \  \nonumber\\
{\rm{subject~to}}\ \ \ p_k\in\mathcal{P},\ \forall k=1,2,...,K,\ \ \ \nonumber\\
\theta_i\in\mathcal{C},\ \forall i=1,2,...,N.\ \ \ \ \ 
\end{eqnarray}
where the objective function of $\mathscr{L}_1$ is defined by
\begin{eqnarray}
f_2(\bm{p},\bm{\theta})=\sum_{k=1}^K\frac{(1+\mu_k)p_k|\bm{\theta}^{\sf{H}}\bm{v}_{k,k}|^2}{2(\sum_{i= 1}^Kp_i|\bm{\theta}^{\sf{H}}\bm{v}_{i,k}|^2+|\bm{\theta}^{\sf{H}}\bm{z}_k|^2+\sigma_d^2)}.
\end{eqnarray}

Problem $\mathscr{L}_1$ is still non-convex due to the sum of multiple-ratio form, which can be solved by fractional programming framework proposed in \cite{shen11}. Another challenge induced by joint optimization need to be tackled via alternatively optimizing $\bm{p}$ and $\bm{\theta}$ in the following subsections. Specifically, in each iteration, we first update $\bm{\mu}$, then optimize $\bm{p}$ and $\bm{\theta}$ respectively. Repeat this procedure until $R_{\text{sum}}$ converges.

\subsection{Power Control}
In this subsection, we shall optimize $\bm{p}$ based on the fixed $\bm{\mu}$ and $\bm{\theta}$. Problem $\mathscr{L}_1$ can be recast as the following power control sub-problem with given $\bm{\theta}$:
\begin{eqnarray}
\mathop {\rm{maximize}}_{\bm{p}}\ f_2(\bm{p})\ \ 
{\rm{subject~to}}\ \ p_k\in\mathcal{P},\ \forall k=1,2,...,K.\ \ 
\end{eqnarray}
The classic power control problem, as a multiple-ratio fraction programming problem, has been well exploited by a novel quadratic transform in \cite{shen11}. Based on its FP framework, above power control problem can be further equivalently reformulate as following biconvex optimization problem:
\begin{eqnarray}
\mathop {\rm{maximize}}_{\bm{p},\bm{\alpha}}\ g_1(\bm{p},\bm{\alpha})\ \ 
{\rm{subject~to}}\ p_k\in\mathcal{P},\ \forall k=1,2,...,K.\ \ 
\end{eqnarray}
where $\bm{\alpha}\in \mathbb{R}^{K\times 1}$ denotes the auxiliary variable vector and the objective function is defined by
\begin{eqnarray}
g_1(\bm{p},\bm{\alpha})=\sum_{k=1}^K \alpha_k\sqrt{2(1+\mu_k)|\bm{\theta}^{\sf{H}}\bm{v}_{k,k}|^2p_k}\nonumber\\
-\sum_{k=1}^{K}\alpha_k^2(\sum_{i=1}^Kp_i|\bm{\theta}^{\sf{H}}\bm{v}_{i,k}|^2+|\bm{\theta}^{\sf{H}}\bm{z}_k|^2+\sigma_d^2).
\end{eqnarray}
By utilizing the convex substructures of above problem, an alternating convex search approach can be further developed to obtain a local optimal solution \cite{Gorski2007}. In particular, only one variable is optimized at each step while others are fixed. Given fixed $\bm{p}$, the optimal $\bm{\alpha}$ is
\begin{eqnarray}
\label{optimal_alpha}
        \alpha_k^*=\frac{\sqrt{2(1+\mu_k)|\bm{\theta}^{\sf{H}}\bm{v}_{k,k}|^2p_k}}{2(\sum_{i=1}^Kp_i|\bm{\theta}^{\sf{H}}\bm{v}_{i,k}|^2+|\bm{\theta}^{\sf{H}}\bm{z}_k|^2+\sigma_d^2)}.
\end{eqnarray}
For fixing $\bm{\alpha}$, the optimal $\bm{p}$ is given by
\begin{eqnarray}
\label{optimal_power}
        p_k^*=\min\left\{P_{\max},\frac{\alpha_k^2(1+\mu_k)|\bm{\theta}^{\sf{H}}\bm{v}_{k,k}|^2}{2(\sum_{i=1}^K\alpha_i^2|\bm{\theta}^{\sf{H}}\bm{v}_{k,i}|^2)^2}\right\}.
\end{eqnarray}
\subsection{Optimizing Reflection Coefficients}
In this subsection, we propose to solve another fractional programming sub-problem via optimizing $\bm{\theta}$ over the fixed $\bm{p}$. Based on the quadratic transform proposed in \cite{shen11}, we can obtain the following equivalent problem:
\begin{eqnarray}
\mathop {\rm{maximize}}_{\bm{\theta},\bm{\beta}}\ g_2(\bm{\theta},\bm{\beta})\ \ 
{\rm{subject~to}}\ \theta_i\in\mathcal{C},\ \forall i=1,...,N.
\end{eqnarray}
where $\bm{\beta}\in\mathbb{C}^{K\times 1}$ denotes the auxiliary variable vector and the new objective function is given by
\begin{eqnarray}
        g_2(\bm{\theta},\bm{\beta})=\sum_{k=1}^K \sqrt{2p_k(1+\mu_k)}{\text{Re}}\{\beta^+_k\bm{\theta}^{\sf{H}}\bm{v}_{k,k}\}\nonumber\\
        -\sum_{k=1}^K|\beta_k|^2(\sum_{i=1}^Kp_i|\bm{\theta}^{\sf{H}}\bm{v}_{i,k}|^2+|\bm{\theta}^{\sf{H}}\bm{z}_k|^2+\sigma_d^2),
\end{eqnarray}
where $\beta_k^+$ denotes its conjugate.
We propose to solve above problem via alternatively optimizing $\bm{\theta}$ and $\bm{\beta}$. For fixed $\bm{\theta}$, the optimal $\beta_k$ can be obtained by setting $\partial g_2/\partial \beta_k$ to be zero, i.e.,
\begin{eqnarray}
\label{optimal_beta}
        \beta_k^*=\frac{\sqrt{2p_k(1+\mu_k)}\bm{\theta}^{\sf{H}}\bm{v}_{k,k}}{2(\sum_{i=1}^Kp_i|\bm{\theta}^{\sf{H}}\bm{v}_{i,k}|^2+|\bm{\theta}^{\sf{H}}\bm{z}_k|^2+\sigma_d^2)}.
\end{eqnarray}
Consider the phase constraint $|\theta_n|\leq 1$, it can be further rewritten as
\begin{eqnarray}
        \bm{\theta}^{\sf{H}}\bm{e}_i\bm{e}_i^{\sf{H}}\bm{\theta}\leq 1,\ i=1,2,...,N,
\end{eqnarray}
where $\bm{e}_i\in\mathbb{R}^{N\times 1}$ denotes an elementary vector with a one at $i$-th position.
Then, optimizing $\bm{\theta}$ based on the fixed $\bm{\beta}$ is a convex QCQP problem, which can be recast as the following equivalent dual problem by Lagrange dual decomposition:
\begin{eqnarray}
        \mathop {\rm{minimize}}_{\bm{\lambda}}\ \sup_{\bm{\theta}} \mathcal{L}(\bm{\theta},\bm{\lambda})\ \ \ \ \ \ \ \ \ \ \ \ \ \ \ \ \ \ \nonumber\\
        {\rm{subject~to}}\ \lambda_n\geq 0,\ n=1,2,...,N,\ \ \ \ 
\end{eqnarray}
where $\bm{\lambda}\in\mathbb{R}^{N\times 1}$ represents the dual variable, and $\mathcal{L}(\bm{\theta},\bm{\lambda})$ denotes the dual objective function, which is given by
\begin{eqnarray}
        \mathcal{L}(\bm{\theta},\bm{\lambda})=g_2(\bm{\theta},\bm{\beta}^*)-\sum_{i=1}^{N}\lambda_i(\bm{\theta}^{\sf{H}}\bm{e}_i\bm{e}_i^{\sf{H}}\bm{\theta}-1),
\end{eqnarray}
where $\bm{\beta}^*$ is the fixed optimal $\bm{\beta}$. $\mathcal{L}(\bm{\theta},\bm{\lambda})$ is a concave function with respect to $\bm{\theta}$. Since the Slater's condition is satisfied, the duality gap is zero \cite{convex}. Therefore, the optimal $\bm{\theta}$ can be obtained via setting $\partial \mathcal{L}/\partial \bm{\theta}$ to be zero, i.e.,
\begin{eqnarray}
\label{optimal_theta}
        \bm{\theta}^*=\frac{1}{2}\left(\sum_{i= 1}^{N}\lambda_i^*\bm{e}_i\bm{e}_i^{\sf{H}}+\bm{A}\right)^{-1}\bm{u},\nonumber\\
        \bm{A}=\sum_{k=1}^K|\beta_k|^2\left(\sum_{i= 1}^Kp_i\bm{v}_{i,k}\bm{v}_{i,k}^{\sf{H}}\right)+\bm{z}_k \bm{z}_k^{\sf{H}},\nonumber\\
        \bm{u}=\sum_{k=1}^K\sqrt{2p_k(1+\mu_k)}\beta_k^*\bm{v}_{k,k},
\end{eqnarray}
where $\bm{\lambda}^*$ denotes the optimal dual variable vector, which can be obtained via ellipsoid method.
\begin{algorithm}
        \label{AMA}
        \caption{Proposed alternating optimization}
        {\textbf{Input}}: initial point $\bm{p}^{(0)}$ and $\bm{\theta}^{(0)}$, threshold $\epsilon> 0$\\
        {\textbf{for $t= 1$ to $1,2,...$}}, {\textbf{do}}\\
        \ \ \ Update $\bm{\mu}^{(t)}$ by (\ref{optimal_mu});\\
        \ \ \ Update $\bm{\alpha}^{(t)}$ by (\ref{optimal_alpha});\\
        \ \ \ Updata power coefficient vector $\bm{p}^{(t)}$ by (\ref{optimal_power});\\
        \ \ \ Update $\bm{\beta}^{(t)}$ by (\ref{optimal_beta});\\
        \ \ \ Update reflection coefficient vector $\bm{\theta}^{(t)}$ by (\ref{optimal_theta});\\
        \ \ \ {\textbf{if}} $f_1^{(t)}-f_1^{(t-1)}\leq \epsilon$\\
        \ \ \ \ \ \ {\textbf{break}};\\
        \ \ \ {\textbf{end if}};\\
        {\textbf{end for}}.\\
        {\textbf{Output}} $\bm{p}^{(t)}$ and $\bm{\theta}^{(t)}$.
\end{algorithm}
\subsection{Proposed Algorithm}
In this subsection, we present the proposed iterative alternating optimization algorithm in Algorithm \ref{AMA}. Specifically, the algorithm starts with two arbitrary initial vectors $\bm{p}^{(0)}$ with $p_i^{(0)}\in\mathcal{P}$ and $\bm{\theta}^{(0)}$ with $\theta_j^{(0)}\in\mathcal{C}$. Then, we iteratively update $\bm{\mu}^{(i)}$ based on the fixed $\{\bm{p}^{(i-1)},\bm{\theta}^{(i-1)}\}$, $\bm{p}^{(i)}$ and $\bm{\theta}^{(i)}$ based on the fractional programming until $f_1(\bm{p},\bm{\theta},\bm{\mu})$ converges.
\begin{lemma}
        The proposed alternating algorithm is guaranteed to converge, with the sum-rate $R_{\text{sum}}$ monotonically nondecreasing after each iteration.
\end{lemma}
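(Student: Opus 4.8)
The plan is to run the standard monotone block-coordinate-ascent argument, carried out on the two fractional-programming reparametrizations used above: I will show that every one of the six updates in Algorithm~\ref{AMA} is an \emph{exact} maximization over a single block of an objective equivalent to $R_{\text{sum}}$, so the objective sequence is nondecreasing, and then use boundedness of $R_{\text{sum}}$ to conclude that it converges.

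First I would record the two equivalences everything rests on. (a) The Lagrangian dual reformulation of \cite{shen11} gives, for every feasible $(\bm p,\bm\theta)$, $\max_{\bm\mu} f_1(\bm p,\bm\theta,\bm\mu)=R_{\text{sum}}(\bm p,\bm\theta)$, the maximizer being the closed form (\ref{optimal_mu}); once $\bm\mu$ is substituted, the first two sums of $f_1$ depend on $\bm\mu$ only, so with $\bm\mu$ held fixed one has $f_1=\mathrm{const}+f_2$, i.e.\ maximizing $f_1$ over $(\bm p,\bm\theta)$ coincides with maximizing $f_2$. (b) The quadratic transform of \cite{shen11} rewrites $f_2$, viewed as a function of $\bm p$ alone (resp.\ of $\bm\theta$ alone), as $g_1(\bm p,\bm\alpha)$ (resp.\ $g_2(\bm\theta,\bm\beta)$) with the identity $\max_{\bm\alpha} g_1(\bm p,\bm\alpha)=f_2(\bm p)$ (resp.\ $\max_{\bm\beta} g_2(\bm\theta,\bm\beta)=f_2(\bm\theta)$), the maximizers being (\ref{optimal_alpha}) and (\ref{optimal_beta}); moreover $g_1(\cdot,\bm\alpha)$ is concave in $\bm p$ over the box $\mathcal P$ and $g_2(\cdot,\bm\beta)$ is concave in $\bm\theta$ with convex per-element constraints, so the updates (\ref{optimal_power}) and (\ref{optimal_theta}) — the latter using Slater and the zero duality gap already noted — are global maximizers of their respective single-block problems.

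Then I would track $f_1$ along one iteration $t$, starting from $(\bm p^{(t-1)},\bm\theta^{(t-1)},\bm\mu^{(t-1)})$. The $\bm\mu$-update raises $f_1$ to $R_{\text{sum}}(\bm p^{(t-1)},\bm\theta^{(t-1)})$ by (a). Holding $\bm\mu=\bm\mu^{(t)}$, the updates $\bm\alpha^{(t)}$ then $\bm p^{(t)}$ give $f_2(\bm p^{(t)})\ge g_1(\bm p^{(t)},\bm\alpha^{(t)})\ge g_1(\bm p^{(t-1)},\bm\alpha^{(t)})=f_2(\bm p^{(t-1)})$ by (b), and similarly $\bm\beta^{(t)}$ then $\bm\theta^{(t)}$ give $f_2(\bm\theta^{(t)})\ge f_2(\bm\theta^{(t-1)})$; since $f_1=\mathrm{const}+f_2$ here, $f_1(\bm p^{(t)},\bm\theta^{(t)},\bm\mu^{(t)})\ge R_{\text{sum}}(\bm p^{(t-1)},\bm\theta^{(t-1)})$. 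Finally $R_{\text{sum}}(\bm p^{(t)},\bm\theta^{(t)})=\max_{\bm\mu} f_1(\bm p^{(t)},\bm\theta^{(t)},\bm\mu)\ge f_1(\bm p^{(t)},\bm\theta^{(t)},\bm\mu^{(t)})$, so chaining the inequalities gives $R_{\text{sum}}(\bm p^{(t)},\bm\theta^{(t)})\ge R_{\text{sum}}(\bm p^{(t-1)},\bm\theta^{(t-1)})$, the claimed monotonicity. For convergence, $p_k\le P_{\max}$ and $|\theta_n|\le 1$ bound each $\text{SINR}_k$, hence $R_{\text{sum}}$, above by a finite constant depending only on the channels, $P_{\max}$, $N$ and $\sigma_d^2$, and a nondecreasing sequence bounded above converges. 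The main obstacle is step (b): one must verify that the closed forms (\ref{optimal_power}) and (\ref{optimal_theta}) are genuine \emph{constrained global} maximizers of the corresponding concave single-block subproblems (box projection for the former; Slater/KKT with zero duality gap for the latter), since the monotonicity of the entire scheme collapses if any one of the six updates is not an exact maximization; the rest is bookkeeping with the identities in (a) and (b).
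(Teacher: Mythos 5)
Your proposal is correct and follows essentially the same route as the paper's Appendix: the $\bm\mu$-update via the Lagrangian dual reformulation, then the $\bm\alpha/\bm p$ and $\bm\beta/\bm\theta$ updates via the quadratic transform, each chained as an exact single-block maximization to give $f_1^{(t-1)}\leq f_1^{(t)}$ and hence a nondecreasing $R_{\text{sum}}$. You are in fact slightly more complete than the paper, since you also close the chain back to $R_{\text{sum}}(\bm p^{(t)},\bm\theta^{(t)})$ and invoke boundedness of $R_{\text{sum}}$ to conclude convergence of the sequence, steps the paper leaves implicit.
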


\begin{proof}
Please refer to Appendix for details.
\end{proof}
\begin{figure}[t]
	\center
	\includegraphics[scale = 0.3]{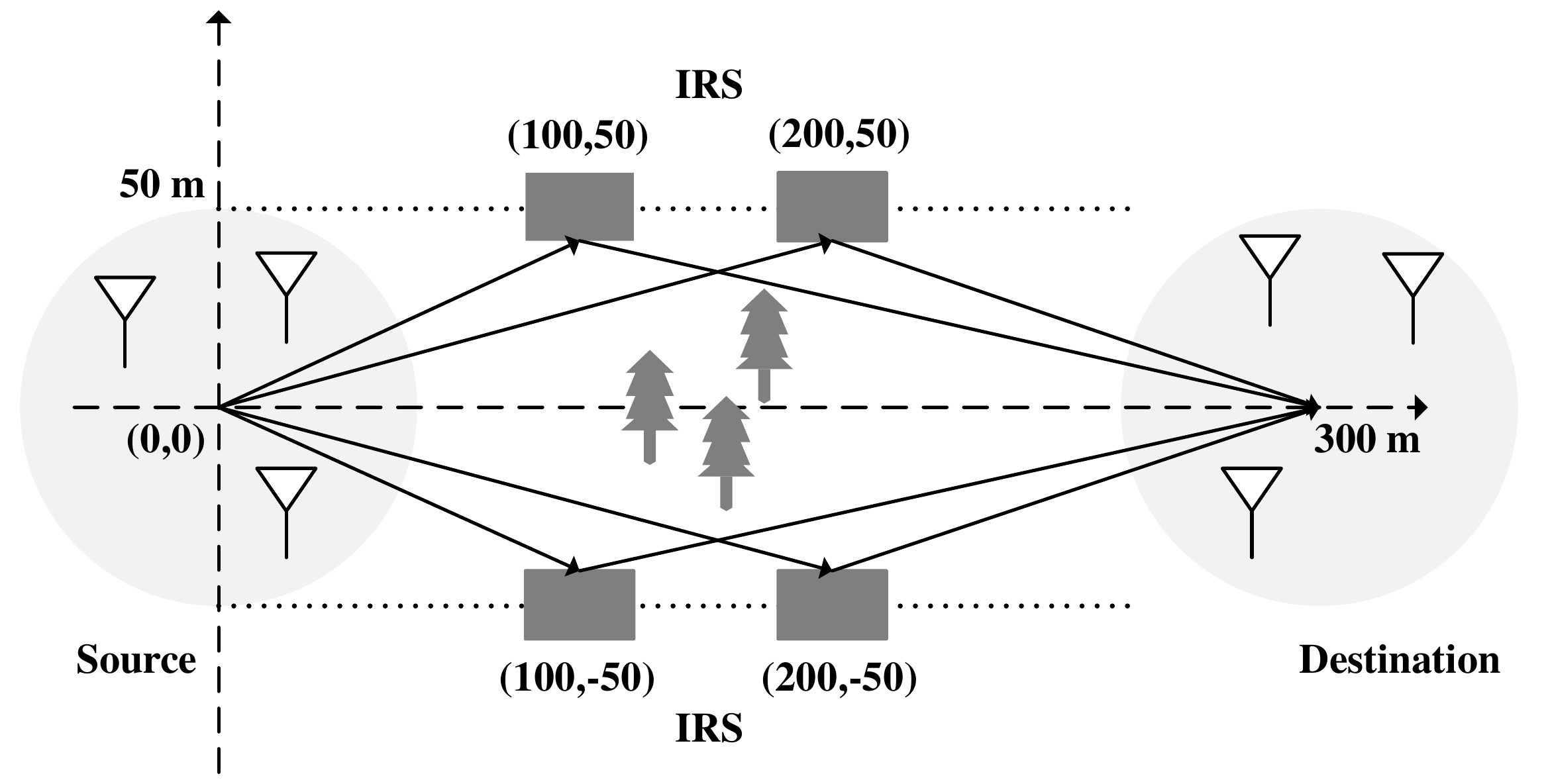}
	\caption{Layout of source-destination pairs and IRS.}
	\label{sim}
\end{figure}

\section{Simulation}
In this section, we simulate the proposed alternating algorithm to evaluate its effectiveness and show the performance of distributed IRS-aided communication systems.

\subsection{Simulation Setting}
We consider a distributed IRS-empowered communication system illustrated in Fig. \ref{sim}, where all single-antenna sources and destinations are uniformly and randomly distributed in two circles centred at $(0,0)$ meters and $(300,0)$ meters with with radius 50 meters, respectively. Four distributed IRSs with a uniform rectangular array of passive reflecting elements are respectively located in $(100,\pm50)$ meters and $(200,\pm50)$ meters. The path loss model we consider is given by $\kappa(d)=T_0(\frac{d}{d_0})^{-\varrho}$,
%\begin{eqnarray}
%        \kappa(d)=T_0(\frac{d}{d_0})^{-\varrho},
%\end{eqnarray}
where $T_0$ denotes the path loss at the reference distance $d_0=1$ meter, $d$ is the link distance and $\varrho$ is the path loss exponent. In this simulation, we assume $T_0=30$ dB, and the path loss for the source-IRS link and the IRS-destination link are respectively set to 2.2 and 2.8. We further assume all the considered channels suffer from Rayleigh fading. To be specific, the channel coefficients are given by
\begin{eqnarray}
        \bm{h}_{l,k}=\sqrt{\kappa(d^{SI}_{l,k})}\bm{\gamma}_{SI},\ \bm{g}_{k,l}=\sqrt{\kappa(d^{ID}_{k,l})}\bm{\gamma}_{ID},
\end{eqnarray}
where $\bm{\gamma}_{SI}\sim\mathcal{CN}(0,\bm{I})$, $\bm{\gamma}_{SI}\sim\mathcal{CN}(0,\bm{I})$, $d^{SI}_{l,k}$ and $d^{ID}_{k,l}$ respectively denote the distance between $k$-th source and $l$-th IRS, the distance between $l$-th IRS and $k$ destination. In addition, we set $\sigma_r^2=\sigma_d^2=0.01$.
\subsection{Simulation Results}

\begin{figure}
	\begin{minipage}[t]{0.5\linewidth}
		\centering
		\includegraphics[scale = 0.28]{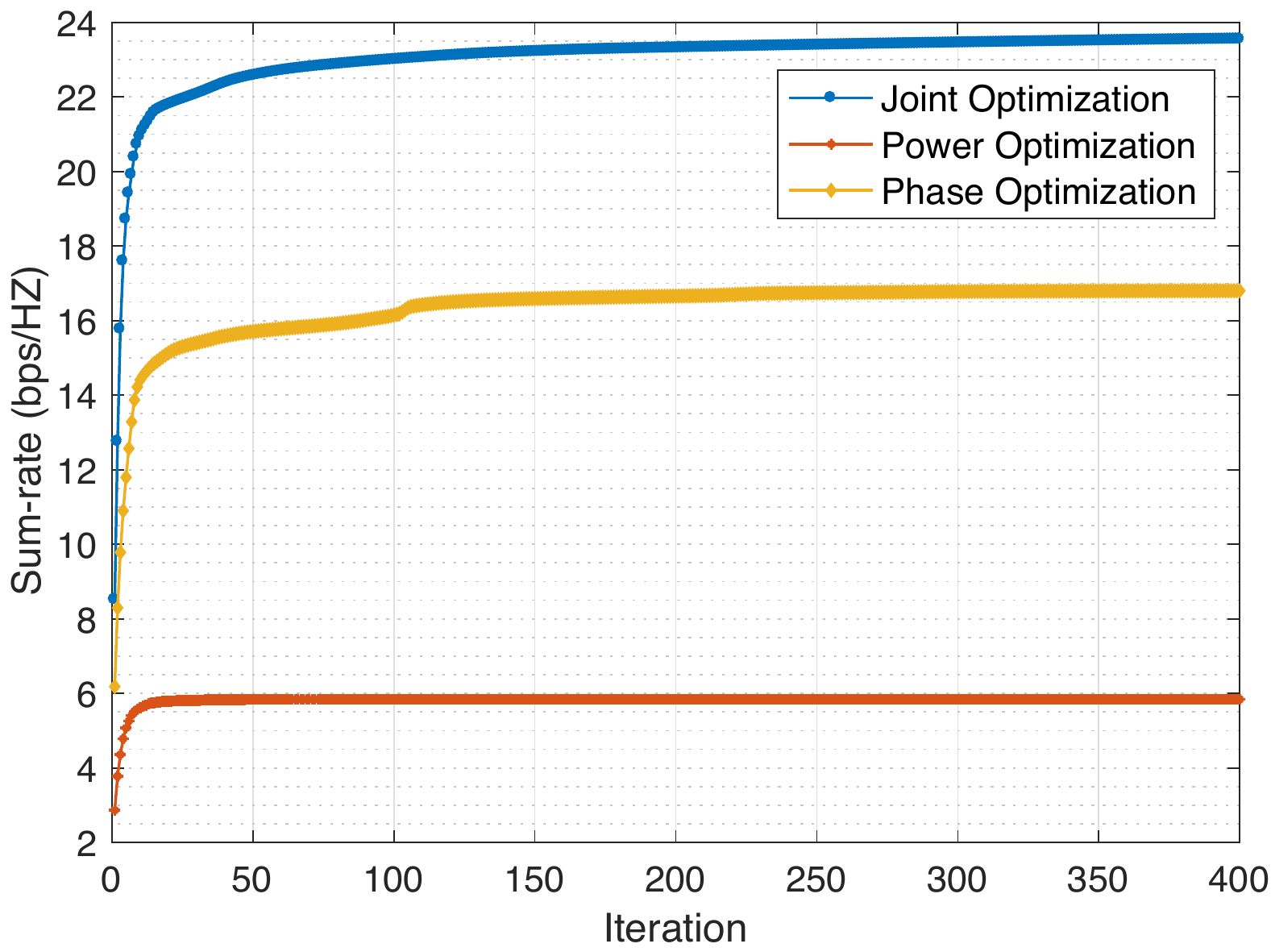}
		\caption{Convergence.}
		\label{final2}
	\end{minipage}%
	\begin{minipage}[t]{0.5\linewidth}
		\centering
		\includegraphics[scale = 0.28]{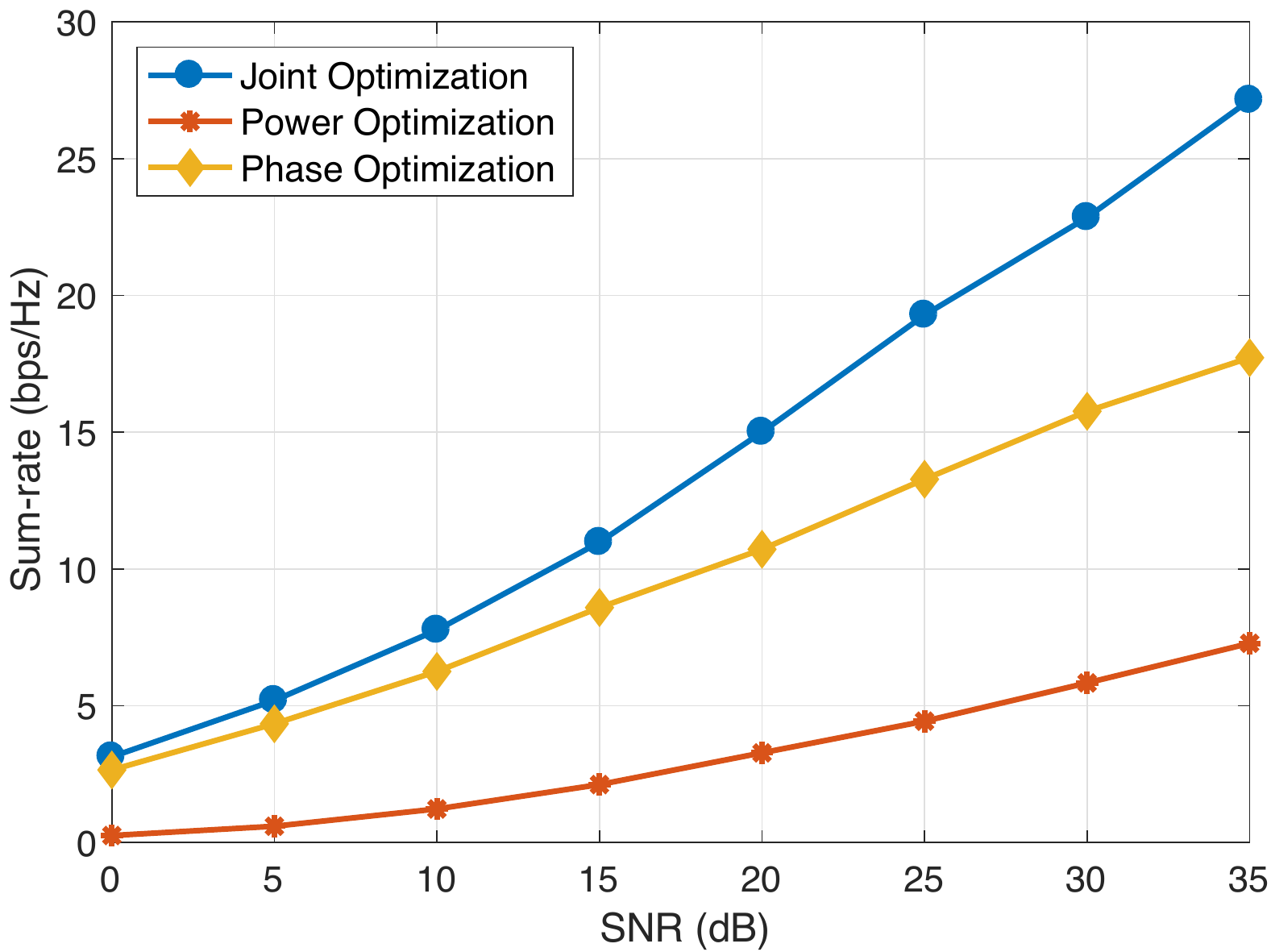}
		\caption{Sum-rate vs. SNR.}
		\label{final1}
	\end{minipage}
\end{figure}

%\begin{figure}[t]
   %     \center
      %  \includegraphics[scale = 0.25]{figure/final2}
        %\caption{Convergence.}
%        \label{final2}
%\end{figure}

We simulate three different alternating algorithms denoted as \emph{Joint Optimization}, \emph{Power Optimization} and \emph{Phase Optimization}, respectively. 
\begin{itemize}
\item \textbf{Joint Optimization}. Jointly optimizing both $\bm{p}$ and $\bm{\theta}$.
\item \textbf{Power Optimization}. Only optimizing $\bm{p}$ with random phase shift.
\item \textbf{Phase Optimization}. Only optimizing $\bm{\theta}$ under the setting $p_i=P_{\max}\ (1\leq i\leq K)$.
\end{itemize}
All the simulation results are obtained by averaging 100 channel realizations with fixed $K=6$ and $M_l=4\ (1\leq l\leq L)$.

Fig. \ref{final2} demonstrates that our proposed alternating algorithm converges under setting $L=4$ and $\text{SNR}=35$ dB.
We further compare the sum-rate versus different SNR in Fig. \ref{final1} with fixed $L=4$. It can be observed that all the alternating algorithms perform better with the increasing SNR, since the transmit power of the signals reflected by IRS increases. However, joint optimization algorithm achieves higher sum-rate than other two algorithms due to its jointly optimizing transmit power $\bm{p}$ at sources and passive phase shifts $\bm{\theta}$ at IRS. 

Then, we investigate the impact of the number of distributed IRS on sum-rate. Since there is no current work to exploit relevant problems, i.e., optimal distributed IRS positions, we randomly and uniformly deploy $L$ IRSs in the given region $[\pm 100,\pm50]\times [\pm 300, \pm60]$ meters. Fig. \ref{final4} shows the sum-rate increases with the increasing number of distributed IRSs, which demonstrates the admirable performance of proposed distributed IRS-empowered system compared with existing single IRS-empowered wireless networks.

%\begin{figure}[t]
%        \center
%        \includegraphics[scale = 0.25]{figure/final1}
%        \caption{Sum-rate vs. SNR.}
%        \label{final1}
%\end{figure}

\begin{figure}[t]
        \center
        \includegraphics[scale = 0.305]{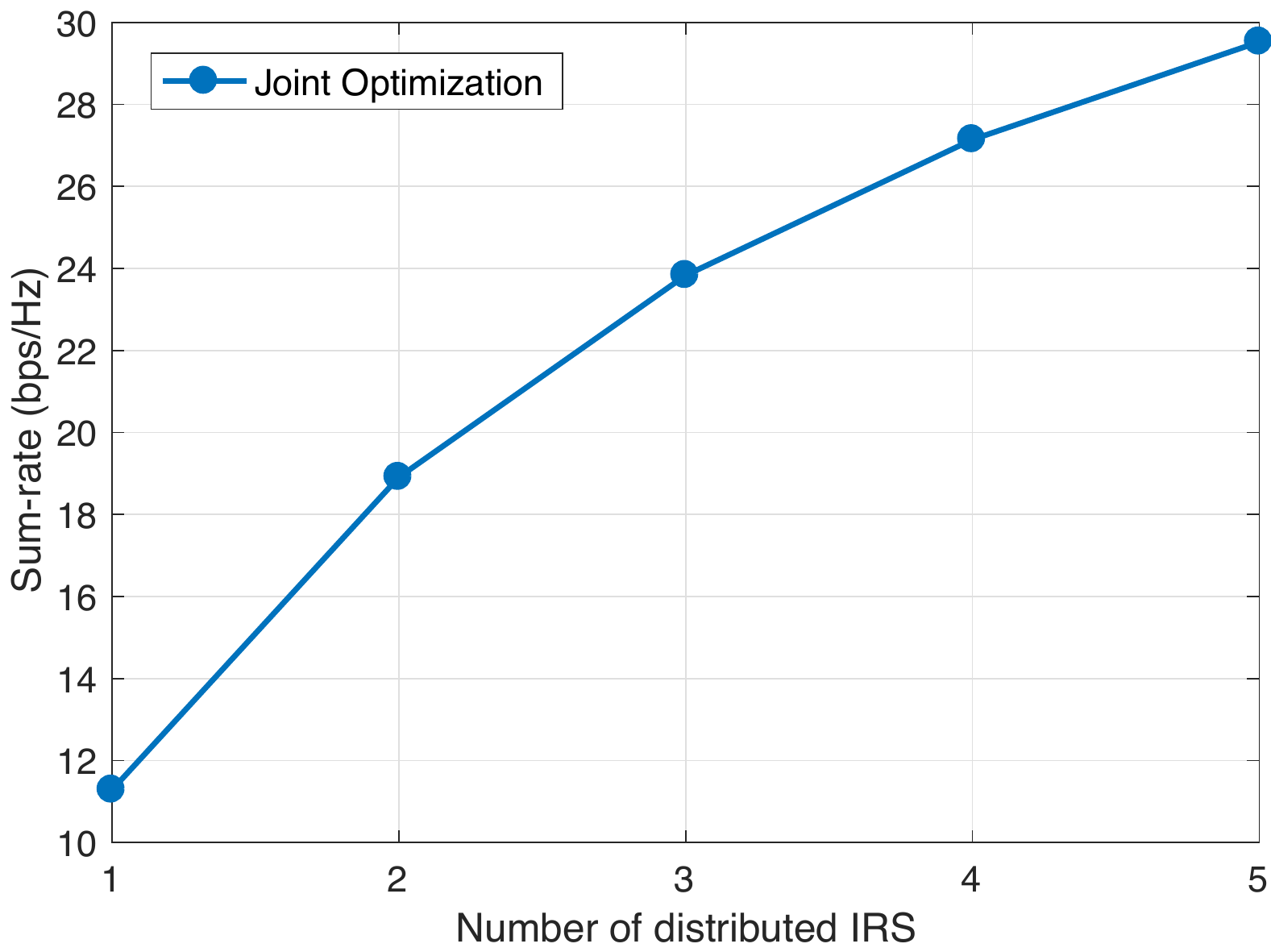}
        \caption{Sum-rate vs. number of distributed IRS.}
        \label{final4}
\end{figure}
\section{Conclusion}

In this paper, we proposed a distributed IRS-empowered wireless network to maximize the achievable sum-rates by jointly optimizing the transmit power vector at the sources and the phase shift matrix with passive beamforming at all distributed IRSs. To solve this non-convex and intractable problem, we presented an alternating algorithm by decoupling transmit power vector and passive beamforming optimization variables, yielding two multiple-ratio fraction programming subproblems. We further transformed the fractional programming problem into biconvex problem, for which an alternating convex search approach with closed-form expressions was developed. Simulation results demonstrated the admirable performance of proposed distributed IRS-empowered system.

\section{appendix}
We now show that our proposed alternating algorithm converges. 
According to the Lagrangian dual reformulation proposed in \cite{shen11}, we have
\begin{eqnarray}
R_{\text{sum}}(\bm{p}^{(t-1)},\bm{\theta}^{(t-1)})=f_1(\bm{p}^{(t-1)},\bm{\theta}^{(t-1)},\bm{\mu}^{(t-1)})\nonumber\\
=\text{constant}(\bm{\mu}^{t-1})+f_2(\bm{p}^{(t-1)},\bm{\theta}^{(t-1)},\bm{\mu}^{(t-1)})\nonumber\\
\leq\text{constant}(\bm{\mu}^{t})+f_2(\bm{p}^{(t-1)},\bm{\theta}^{(t-1)},\bm{\mu}^{(t)}),\ \ \ \ \ \ 
\end{eqnarray}
where $\text{constant}(\bm{\mu})=\sum_{k=1}^K \frac{1}{2}\log_2(1+\mu_k)-\sum_{k=1}^K\frac{1}{2}\mu_k$.
Based on the quadratic transform in \cite{shen11}, we have
\begin{align}
f_2(\bm{p}^{(t-1)},\bm{\theta}^{(t-1)},\bm{\mu}^{(t)})&= g_2(\bm{p}^{(t-1)},\bm{\theta}^{(t-1)},\bm{\alpha}^{(t-1)},\bm{\mu}^{(t)})\nonumber\\
&\leq g_2(\bm{p}^{(t-1)},\bm{\theta}^{(t-1)},\bm{\alpha}^{(t)},\bm{\mu}^{(t)})\nonumber\\
&\leq g_2(\bm{p}^{(t)},\bm{\theta}^{(t-1)},\bm{\alpha}^{(t)},\bm{\mu}^{(t)})\nonumber\\
&=f_2(\bm{p}^{(t)},\bm{\theta}^{(t-1)},\bm{\mu}^{(t)}),
\end{align}
where $\bm{\alpha}^{(t)}$ and $\bm{\theta}^{(t)}$ are updated by (\ref{optimal_alpha}) and (\ref{optimal_power}), respectively. Similarly, we can also obtain
\begin{eqnarray}
        f_2(\bm{p}^{(t)},\bm{\theta}^{(t-1)},\bm{\mu}^{(t)})\leq f_2(\bm{p}^{(t)},\bm{\theta}^{(t)},\bm{\mu}^{(t)}).
\end{eqnarray}
Therefore, we can proof the convergence behavior of proposed alternating algorithm by combining the above equations:
\begin{eqnarray}
f_1(\bm{p}^{(t-1)},\bm{\theta}^{(t-1)},\bm{\mu}^{(t-1)})\leq f_1(\bm{p}^{(t)},\bm{\theta}^{(t)},\bm{\mu}^{(t)}).\nonumber
\end{eqnarray}

\bibliographystyle{IEEEtran}
\bibliography{IEEEabrv,Reference}
\end{document}